\documentclass[a4paper,10pt]{article}

\usepackage[english]{babel}
\usepackage{authblk}

\usepackage{changes, pdfcomment}

\usepackage{amsmath, amsfonts, amsthm, amssymb, mathtools, nicefrac }
\newtheorem{lemma}{Lemma}
\usepackage[subnum]{cases}

\usepackage{graphicx, subfig}
\usepackage{pgfplots, pgfplotstable}
\usepackage{tikz}
\usepackage{xcolor, color, colortbl, pifont}

\usepackage{float, enumerate, array, caption, framed, enumitem, listings, tabularx,
			makecell, url}
\usepackage{eurosym, bbm, gensymb }
\usepackage{breakurl}

\usepackage{appendix, booktabs, lscape, selectp}
\usepackage[vlined, boxruled]{algorithm2e}

\usepackage[backend=bibtex, hyperref=false, natbib=true, style=chicago-authordate, backref=true, backrefstyle=three,  firstinits=true]{biblatex}

\usepackage{hyperref}
\hypersetup{
	pdfauthor = {T.R.B. den Haan, K.W. Chau, M. van der Schans, C.W. Oosterlee}, 
	pdftitle = {Rule-based Strategies for Dynamic Life Cycle Investment},
	pdfkeywords = {Life cycle investing} {Pensions} {Defined contribution} {Rule-based strategies} {Dynamic programming},
    breaklinks=true
}

\title{Rule-based Strategies for Dynamic Life Cycle Investment}
\date{\today} 

\author[1,3]{T.R.B.~den~Haan%
	\thanks{Electronic address: \texttt{Tim.denHaan@ortec-finance.com}; Corresponding author}
}
\author[4]{K.W.~Chau}
\author[3]{M.~van~der~Schans}
\author[1,2]{C.W.~Oosterlee}
\affil[1]{Delft University of Technology, DIAM - Delft Institute of Applied Mathematics, Delft, the Netherlands}
\affil[2]{CWI - The Center for Mathematics and Computer Science, Amsterdam}
\affil[3]{Ortec Finance, Rotterdam, the Netherlands}
\affil[4]{University of Groningen, Department of Economics, Econometrics and Finance, Groningen, the Netherlands}

\DeclareMathOperator*{\argmax}{arg\,max}

\newcommand{\R}{\mathbb{R}}
\newcommand{\E}{\mathbb{E}}
\newcommand{\T}{\mathcal{T}}
\newcommand{\A}{\mathcal{A}}
\newcommand{\F}{\mathcal{F}}
\renewcommand{\S}{\mathcal{S}}

\newcommand{\EX}[1]{\E\left[ #1 \right]}

\newcommand{\lrp}[1]{\left( #1 \right)}
\newcommand{\lrb}[1]{\left[ #1 \right]}

\pgfplotsset{
	/pgfplots/area legend/.style={%
		/pgfplots/legend image code/.code={%
			\fill[##1] (0cm,-0.1cm) rectangle (0.6cm,0.1cm);
		}%
	},
}

\definecolor{OFblue}{RGB}{0,132,203}
\definecolor{OForange}{RGB}{245,128,37}
\definecolor{OFgreen}{RGB}{135,187,64}
\definecolor{OFdarkblue}{RGB}{33,64,122}
\definecolor{OFyellow}{RGB}{252,175,67}
\definecolor{OFmediumblue}{RGB}{106,117,161}
\definecolor{OFdarkblue}{RGB}{33,64,122}
\definecolor{OFdarkgray}{RGB}{168,169,172}
\definecolor{OFlightgray}{RGB}{208,211,214}
\definecolor{CustomPurple}{RGB}{229,100,100}
\definecolor{PyBlue}{rgb}{0.12156862745098,0.466666666666667,0.705882352941177}
\definecolor{PyOrange}{rgb}{1,0.498039215686275,0.0549019607843137}
\definecolor{PyGreen}{rgb}{0.172549019607843,0.627450980392157,0.172549019607843}
\definecolor{PyRed}{rgb}{0.83921568627451,0.152941176470588,0.156862745098039}
\definecolor{PyPurple}{rgb}{0.580392156862745,0.403921568627451,0.741176470588235}
\definecolor{PyBrown}{rgb}{0.549019607843137,0.337254901960784,0.294117647058824}
\definecolor{PyPink}{rgb}{0.890196078431372,0.466666666666667,0.76078431372549}
\definecolor{PyYellowGreen}{RGB}{245,148,0}
\definecolor{TUblue}{RGB}{0,166,214}
\definecolor{TUblack}{RGB}{0,0,0}
\definecolor{TUwhite}{RGB}{255,255,255}
\definecolor{TUskyblue}{RGB}{110,187,213}
\definecolor{TUpurple}{RGB}{29,28,115}
\definecolor{TUorange}{RGB}{230,70,22}
\definecolor{TUyellow}{RGB}{225,196,0}
\definecolor{TUred}{RGB}{226,26,26}
\definecolor{TUgreen}{RGB}{0,136,145}
\definecolor{TUbrightgreen}{RGB}{165,202,26}
\definecolor{TUwarmpurple}{RGB}{109,23,127}
\definecolor{TUgreygreen}{RGB}{107,134,137}
\definecolor{color0}{rgb}{0.0439830834294502,0.489042675893887,0.257977700884275}
\definecolor{color1}{rgb}{0.201307189542484,0.644444444444444,0.338562091503268}
\definecolor{color2}{rgb}{0.468896578239139,0.771318723567858,0.395770857362553}
\definecolor{color3}{rgb}{0.686274509803921,0.866205305651672,0.438523644752018}
\definecolor{color4}{rgb}{0.862668204536717,0.942176086120723,0.561091887735486}
\definecolor{color5}{rgb}{0.999923106497501,0.99761630142253,0.745021145713187}
\definecolor{color6}{rgb}{0.996386005382545,0.887966166858901,0.561091887735486}
\definecolor{color7}{rgb}{0.992848904267589,0.716955017301038,0.409457900807382}
\definecolor{color8}{rgb}{0.966551326412918,0.497424067666282,0.295040369088812}
\definecolor{color9}{rgb}{0.881045751633987,0.26797385620915,0.189542483660131}
\definecolor{color10}{rgb}{0.731641676278355,0.0811995386389852,0.150711264898116}

\pgfplotscreateplotcyclelist{MyLineList}{%
	{OFblue, mark = none, thick},
	{OForange, mark = none, thick},
	{OFgreen, mark = none, thick},
	{OFyellow, mark = none, thick},
	{OFdarkblue, mark = none, thick},
	{TUblue, mark = none, thick},
	{TUblack, mark = none, thick}
}

\pgfplotscreateplotcyclelist{MyDashedList}{%
	{OFblue, mark = none, thick},
	{OForange, mark = none, thick, dashed},
	{OFgreen, mark = none, thick, dashed},
	{OFyellow, mark = none, thick, loosely dashed},
	{OFdarkblue, mark = none, thick, dashed},   
	{TUblue, mark = none, thick, dashed},
	{TUblack, mark = none, thick, dashed}
}
\pgfplotscreateplotcyclelist{MyScatterList}{%
	{only marks, draw=OFblue, fill=OFblue},
	{only marks, draw=OForange, fill=OForange},
	{only marks, draw=OFgreen, fill=OFgreen},
	{only marks, draw=OFdarkblue, fill=OFdarkblue},
	{only marks, draw=OFyellow, fill=Ofyellow},
	{only marks, draw=TUblue, fill=TUblue},
	{only marks, draw=Tublack, fill=TUblack}
}
\pgfplotscreateplotcyclelist{MyFillList}{%
	{fill=OFblue},
	{fill=OForange},
	{fill=OFgreen},
	{fill=OFyellow},
	{fill=OFmediumblue},
	{fill=OFdarkgray},
	{fill=OFlightgray},
	{fill=OFdarkblue}
}
\pgfplotsset{
	colormap={MyColorMap}{
		rgb=(0.0439830834294502,0.489042675893887,0.257977700884275),
		rgb=(0.201307189542484,0.644444444444444,0.338562091503268),
		rgb=(0.468896578239139,0.771318723567858,0.395770857362553),
		rgb=(0.686274509803921,0.866205305651672,0.438523644752018),
		rgb=(0.862668204536717,0.942176086120723,0.561091887735486),		
		rgb=(0.999923106497501,0.99761630142253,0.745021145713187),
		rgb=(0.996386005382545,0.887966166858901,0.561091887735486),
		rgb=(0.992848904267589,0.716955017301038,0.409457900807382),
		rgb=(0.966551326412918,0.497424067666282,0.295040369088812),
		rgb=(0.881045751633987,0.26797385620915,0.189542483660131),
		rgb=(0.731641676278355,0.0811995386389852,0.150711264898116),
	},
}

\addbibresource{bibliography.bib}

\begin{document}

\maketitle

\section{Introduction}\label{introduction}
Nowadays, many people invest their retirement savings in a defined contribution pension scheme. In such a scheme, the contributions are agreed upon and are, e.g., a percentage of one's salary. The pension, however, is uncertain as it depends on the returns on investment. At retirement, the accumulated wealth is converted to a pension income that intends to replace a proportion of the investor's income, typically about 70\%, which is referred to as the replacement ratio. In this paper, we propose a dynamic strategy that optimally steers the investor towards a replacement ratio target.

Our dynamic strategy will reduce risk after several years of good returns on investment. It presumes that upward potential concurs with downside risk. Our pension investor is only interested in reaching her replacement ratio target, i.e., not making the target is considered downside risk and she feels indifferent about any two values above the target. We will show that, in this sense, the designed dynamic strategy outperforms static life cycle strategies. By decreasing risk after several good years, our dynamic strategy prevents unnecessary risk taking.

A well-known static life cycle strategy is known as Bogle's rule \citep{bogle}, which prescribes to invest $100\%$ minus one's age in risky assets. Decreasing risk in the course of the life cycle in such a way is called a glide path. When the glide path is known in advance up to retirement, the strategy is static and does not adjust as events unfold. Therefore, static strategies may take unnecessary risk when returns on investment are better than anticipated, see \citet{Arnott2013, Graf2017} for a discussion of drawbacks of static life cycle strategies. The strategy we propose is also rule-based, but it is dynamic as the prescribed rule depends on events that still have to unfold.

In the literature, dynamic strategies are often studied in the context of dynamic programming \citep{Bellman}. Dynamic programming optimizes the investment strategy backwards in time by optimizing decisions for the coming period given that consecutive decisions are already taken optimally.  \citet{Mer69} was the first to apply dynamic programming to an asset allocation problem with two assets, a risky and a risk-free asset, also allowing for consumption during the investment period. Optimal decisions were based on the constant relative risk aversion utility function. \citet{Mer69} showed that the optimal strategy continuously rebalances, i.e., the optimal allocation is constant.

The literature on optimal asset allocation is very rich, and we cite here some contributions that influenced our work.
The authors in \citet{Li2000} introduced mean-variance strategies with respect to a wealth target. The wealth target then allows the investor to identify a surplus: wealth up to the target may be invested in stocks, any remainder is invested in the risk-free rate. \citet{Zhang} solved a similar, however utility-based, problem and combined dynamic programming with the least squares Monte Carlo method. Upper and lower bounds for the wealth were prescribed in that paper, showing that upward potential comes with downside risk. Terminal wealth is steered towards a desired range by investing the difference between a risk-free-discounted upper bound value and the current wealth in the risk-free asset. 
\citet{Forsyth} also applied dynamic programming and used a PDE solver to solve a so-called time-consistent mean-variance problem, meaning that similar mean-variance problems were solved at future times. In addition to mean-variance that balances the mean and variance of returns, they studied a problem with a fixed wealth target. To reduce risk, both \citet{Forsyth} and \citet{Zhang} proposed to invest excess wealth in a risk-free asset. Similarly, the rule-based strategies introduced in this paper will invest excess wealth into a so-called matching portfolio. Compared to static strategies, distributions of outcomes are more centered around the target value and the area below the target value will become smaller. 

Besides many positive aspects, dynamic programming and its resulting strategies also have some drawbacks. First of all, dynamic programming is computationally rather intensive. Secondly, the corresponding investment decisions can be sensitive to small changes in parameters and underlying assumptions. Because of this, the allocation may fluctuate over time resulting in large turnovers, of which, from a practical perspective, it is hard to explain why they are required. Intuitively defined rules, typically, do not suffer from these drawbacks. Moreover, it is not straightforward to apply dynamic programming to the pension settings as an investor's replacement ratio target often depends on inflation influencing the future, which in turn also influences the future contributions. 

Rule-based dynamic strategies fall in between the static and dynamic programming paradigms, when well constructed they aim for the best of both worlds. As shown by \citet{Basu2011}, even simple rule-based strategies that reduce risk half way in the life cycle can outperform static life cycle strategies. Compared to \citet{Basu2011}, our rule-based strategies can reduce risk annually, and consider the market price of future pension payments instead of a wealth target. Next to the rule-based strategies, we will also combine a rule-based strategy with dynamic programming in an integrated approach.

\section{The optimal asset allocation problem}
\subsection{Model setting}\label{theoretical_background}
To demonstrate the rule-based strategy's practical value, we will consider a specific pension investor (we will choose typical retirement data from the Netherlands). At $t=0$, the 26 year old investor will start saving up to retirement at time $t=T$, coinciding here with a retirement age of 67 years. She intends to replace 70\% of her income by her pension (including government allowances for old age). Although, in practice, an investor might be interested in insuring longevity risk or be interested in employing advanced withdrawal strategies, \citet{Blanchett2012} illustrates that simple withdrawal strategies can perform well, e.g., based on an annuity with a maturity roughly equal to an investor's life expectancy. Therefore, as we focus on accumulating wealth before retirement, we simply assume the investor buys an annuity that indexes with the expected inflation, i.e., a bond which, apart from indexation for expected inflation, equals annual payoffs, for a period of $N$, say 20, years after retirement. Whichever withdrawal strategy an investor might follow, the assumption here is that this annuity gives a good estimate of, at least, the investor's income in her first year after retirement, and, thereby, to what extend she can replace her salary for 70\% with a pension.

The investor can invest her wealth $W_t$ in a risky, equity-like, asset, which is called the return portfolio, or in a safe, bond-like, asset with annual payoffs during retirement, the matching portfolio. In our setting, the strategy will use the matching portfolio to protect the current gains, and it grows with inflation. Therefore, the matching portfolio also carries risk. Put differently, we assume the investor doesn't hedge inflation risk with inflation protected securities as the market for inflation protected is illiquid and strategies that hedge against inflation are not straightforward to follow in practice \citep{Martellini2014}. Finally, we assume there is no risk-free rate to invest money in.

The investor annually manages her portfolio, i.e., decisions, contributions and pension payments are made in discrete time, which runs up to retirement, from $t=0$ to $t=T$. The pension payments start at $t=T$ and run up to $t=T+N-1$. At time $t\leq T$ before retirement, she invests a fraction $\alpha_t$ of her wealth $W_t$ in the return portfolio. The investor is not allowed to short-sell assets or borrow money, so that
\begin{equation}
0\leq\alpha_t\leq1.
\end{equation}
In the dynamic programming literature, $\alpha_t$ is referred to as the control (as decisions intend to give the investor control over the outcome). A strategy maps information $Z_t$ available at time $t$, e.g., past returns and current wealth $W_t$, to the desired allocation:
\begin{equation}\label{eq:def_control}
\alpha_t:\R^K\ni Z_t \mapsto \alpha_t(Z_t) \in [0,1].
\end{equation}
Here $Z_t$ is adapted to a filtration $\F_t$, governing the underlying stochastic processes. Before time $t$, the information $Z_t$ is not yet available, and $\alpha_t$ is thus a stochastic quantity. In a static strategy, such as Bogle's rule, $\alpha_t$ only depends on time and is known, i.e., not stochastic, not even when the information $Z_t$ is not yet available. In practice, risk is reduced towards retirement, meaning that $\alpha_t$ typically decreases over time.

Just before rebalancing, the investor makes a contribution $c_t$ to the portfolio. These contributions resemble an age-dependent percentage $p_t$, see Table \ref{table_salary}, of the investor's salary $s_t$ which she earned in the period $t-1$ up to $t$. We assume that the investor's salary $s_t$ follows a deterministic career path, i.e., it increases with age. The investor's salary also increases stochastically with the wage inflation $w_t$, see Appendix \ref{sec:career-path-and-contribution-rate}.

The investor's objective is to achieve her 70\% replacement ratio target at retirement without encurring too much downside risk. The replacement ratio at retirement, $R_T$, is given by
\begin{equation}\label{eqn:def_replacement_ratio}
R_T= \frac{W_T}{M_T} 
\cdot
\frac{T+1}{\sum\limits_{t = 0}^{T} s_t\prod\limits_{\tau=t+1}^{T} (1+\pi_\tau)}.
\end{equation}
Here, the second term divides the investor's average wage in nominal amounts indexed with inflation $\pi_t$ to retirement at $t=T$, and $M_t$ is the market value factor that discounts $N$ future pension payments indexed by expected inflation to time $t\leq T$:
\begin{equation}\label{eqn:market value factor}
M_t=\sum\limits_{\tau=T}^{T+N-1} (1+r^{\tau-t}_t)^{\tau-t}\;\E_t\left[\prod\limits_{\tau'=t+1}^\tau(1+\pi_{\tau'})\right],
\end{equation}
where $\E_t$ is the expectation, conditional on $\F_t$ (i.e., conditional on the information available at time $t$), and $r^{\tau-t}_t$ represents the market rates that discount payments from $\tau-t$ years into the future back to the present time. Using the market value factor $M_T$ at retirement, the first term in \eqref{eqn:def_replacement_ratio} converts the accumulated wealth $W_T$ to $N$ annual income payments indexed for expected future inflation. 

To measure whether a strategy achieves the investor's objective, we use a utility function, $U$, which, whenever decisions are to be taken, intends to maximize the following expression in expectation:
\begin{equation}
\max_{\alpha_t,\ldots, \alpha_{T-1}}~\E \left[{U(Z_T)}\left|\right.\F_t\right],
\end{equation}
where $\F_t$ represents current market information, $\alpha_t$ is as in \eqref{eq:def_control} and $Z_T$ is a vector with outcomes including the terminal replacement ratio.
Although other choices are possible, we choose $U(.)$ to be the shortfall below the investor's target replacement ratio of 70\%:
\begin{equation}\label{eqn:utility_shortfall}
U(Z_T)=\min(RR_T-70\%,0),
\end{equation}
where there is no shortfall in replacement ratio if it ends above $70\%$.
 Note that this measure is not conditional on the shortfall. 
So, additionally, we will also evaluate a strategy's performance using the 10\% conditional value at risk $\mathrm{CVaR_{0.1}}(RR_T)$ of the replacement ratio, i.e., the expectation of the $10\%$ worst case outcomes as defined by
\begin{equation}\label{kkk}
\mathrm{CVaR}_\alpha\; (RR_T)  = \E\left[RR_T\left|RR_T\leq F_{RR_T}^{-1}(\alpha)\right.\right],
\end{equation}
where $F_{RR_T}^{-1}(\alpha)$ is the inverse cumulative distribution function of terminal replacement ratio $RR_T$ and represents the $\alpha$-th quantile below which are the worst case outcomes.

\subsection{Governing stochastic model}\label{sec:governing-stochastic-model}
For general applicability, we require that the designed strategies are not defined in terms of the governing stochastic model parameters. That is, the strategies can be applied when different governing stochastic models would be used. We merely assume that the governing stochastic model can be simulated by means of a Monte Carlo simulation. To make this explicit, we choose to use a standard model developed to make risk analyses comparable between Dutch pension funds, see \citep{KNW2009}. The model and its calibration are well documented \citep{Draper2014}. Calibration on recent market data and a Monte Carlo simulation of the model are publicly available at the website of the Dutch Central Bank \citep{HBT2016}. In this paper, we use the set of 2017 (quarter 1), which is calibrated on data up to ultimo 2016 and start simulating from there.

In discrete time, the model is a VAR(1) model with normally distributed increments, see \citet{Muns2015} for a short summary of the model specification. In the calibration, some structure is imposed to achieve realistic market dynamics. Based on the model, sample paths are generated for the following variables:
\begin{itemize}
	\item Equity returns $x_t$, which are used for the return portfolio;
	\item Inflation $\pi_{t}$;
	\item Wage inflation $w_t$, which equals inflation $\pi_{t}$ plus $0.5\%$;
	\item A yield curve with interest rates $r_t^m$ containing rates for each maturity $m$.
\end{itemize}
The matching portfolio is tailored to the investor's retirement age. Its returns $m_t$ equal the rate of change in the market value factor:
\begin{equation}\label{eqn:returns matching portfolio}
m_t=\frac{M_t}{M_{t-1}}-1,
\end{equation}
where $M_t$ is defined in \eqref{eqn:market value factor}. Note that the matching portfolio protects the investor against \textit{expected} future inflation. To determine the expected future inflation, we use the least squares Monte Carlo technique, as presented in Section \ref{sec:least-squares-monte-carlo-method}.

Table \ref{tab:annual_stats} gives the annual return statistics of the variables. Due to the fluctuating market price of future pension payments, the standard deviation of the matching returns is very similar to the one of the equity returns. Although the matching portfolio follows these fluctuations, it is considered less risky, in terms of the investor's goals. By investing in the matching portfolio, the pensioner will receive the corresponding amount from the annuity, no matter the future market prices.
\begin{table}[H]
	\centering
	\begin{tabular}{lrrrrr}
		{} &  $x_t$ &  $m_t$ & $r_t^{10}$ &$\pi_t$ &   $w_t$ \\
		\toprule
		Mean &	6.1\% & 3.4\% & 2.5\% & 1.6\% & 2.1\% \\
		Standard deviation&	18.3\% & 18.5\% & 2.4\% & 1.5\% & 1.5\% \\
		\midrule
		Correlations&&\\		
		Equity return ($x_t$)  &     1.00  &  &   &  &  \\
		Matching return ($m_t$)  &	-0.06 & 1.00  &  & & \\
		10 year interest ($r_t^{10}$) &	0.17  & -0.17 & 1.00  &  &  \\
		Inflation ($\pi_t$ )  &	0.11  & -0.04 & 0.82  & 1.00  & 1.00 \\
		Wage inflation ($w_t$) & 	0.11  & -0.04 & 0.82  & 1.00  & 1.00 \\
		\bottomrule
	\end{tabular}
	\caption{Annual statistics of the underlying stochastic model calculated on a sample that combines all sample paths.}\label{tab:annual_stats}%
\end{table}

\section{Rule-based strategies}\label{rule_based}
In this section, we define three rule-based strategies: a cumulative target strategy that decreases risk once it reaches a cumulative target for the contributions paid so far, an individual target strategy that tracks the investments of the contributions separately and decreases risk once it reaches the target for that contribution, and a combination strategy that combines the two with dynamic programming. The strategies all intend to steer towards a target replacement ratio of 70\%, and decrease risk when return on investment develops well. The strategies differ in their views on when return on investment has been developing well enough to decrease risk.

\subsection{Cumulative target strategy}\label{cumulative_target_strategy}
The cumulative target strategy that we consider here has similarities with the strategies studied in~\citet{Zhang} and~\citet{Forsyth}: risk is reduced once wealth exceeds a pre-defined wealth target. Contrary to~\citet{Zhang} and~\citet{Forsyth}, however, our investor saves for retirement and we relate the wealth target to the price of a bond with payoff equal to the desired pension.

Given a density forecast for the matching and return portfolios, see Section \ref{sec:governing-stochastic-model}, the strategy depends on two parameters: a required real rate of return $r$ (before retirement) and a discount rate $\delta$ (after retirement) to discount pension payments after retirement to the retirement date.
At time $t$ before retirement, i.e., $t\in 0\ldots T$, the investor contributes $c_t$ to her pension savings, see Table \ref{table_salary}. The contributions $c_\tau$ up to time $t$, i.e., $\tau=0,\ldots,t$, are supposed to grow with inflation $\pi$, plus the real rate of return $r$, to a target wealth $c_\tau\; \E_t F_\tau$ at retirement, where $F_t$ is given by
\begin{equation}\label{eqn:target_wealth_factor}
F_\tau = \prod\limits_{\tau'=\tau+1}^T(1+r+\pi_{\tau'}),
\end{equation}
and the conditional expectation, $\E_t$, enforces that the realized inflation is used before time $t$ and  the expected inflation is used beyond time $t$. The wealth targets at retirement for all contributions $c_\tau$ up to time $t$ are combined and converted into a target pension using a discount factor, $\tilde M_T$, which is based on the discount rate $\delta$:
\begin{equation}\label{eqn:post_retirement_factor}
\tilde M_T=\sum\limits_{\tau=T}^{T+N-1} \frac{1}{(1+\delta)^{\tau-T}}.
\end{equation}
Using the market value factor $M_t$, as defined in \eqref{eqn:market value factor}, this gives us the following current target wealth $\tilde{W}_t$:
\begin{equation}\label{eqn:cumulative_target_wealth}
\tilde{W}_t=\frac{M_t}{\tilde M_T}\sum\limits_{\tau=0}^{t} c_\tau \; \E_tF_\tau,
\end{equation}
where the summation represents the combined wealth targets at retirement for all contributions $c_\tau$ up to time $t$.

The cumulative target strategy starts by investing new contributions $c_t$ in the risky asset. If the current wealth $W_t$, including the current contribution $c_t$, exceeds the target wealth $\tilde{W}_t$, risk is reduced and $W_t$ is transferred to the matching portfolio. For the matching portfolio, the investor follows a buy and hold strategy. New contributions invested in the risky asset, will also be transferred to the matching portfolio if the current wealth $W_t$, which consists of the current contribution $c_t$, the value of the matching portfolio and the value of the return portfolio, exceeds the target wealth $\tilde{W}_t$. In other words, at $t=0$, the control $\alpha_0$ is given by
\begin{numcases}{\alpha_0=\label{eqn:cumulative_control_alpha_0}}
0 &  if $W_0\geq\tilde W_0$,\\
1 & otherwise,
\end{numcases}
and, for $t=1\ldots T$, the control $\alpha_t$ is given by
\begin{numcases}{\alpha_t=\label{eqn:cumulative_control_alpha_t}}
0 &  if $W_t\geq\tilde W_t$,\\
\frac{\alpha_{t-1}(1+x_t)}{\alpha_{t-1}(1+x_t)+(1-\alpha_{t-1})(1+m_t)} & otherwise.
\end{numcases}

\subsection{Individual target strategy}\label{sec:indv_target}

Contrary to the cumulative target strategy, the individual target strategy, which is the second strategy we will analyze here, defines a wealth target per contribution and invests each contribution separately, i.e., the wealth $W_t$ is seen a sum of the individual wealth components resulting from investing the contributions separately:
\begin{equation}
W_t=\sum\limits_{\tau=0}^t W_{t,\tau}, 
\end{equation}
where $W_{t,\tau}$ is the wealth component from investing the contribution $c_\tau$.
As in \eqref{eqn:cumulative_target_wealth}, a wealth target $\tilde{W}_{t,\tau}$, at time $t$ for a contribution invested at time $\tau\leq t$, is given by
\begin{equation}\label{eqn:individual_wealth_target}
\tilde{W}_{t,\tau}=\frac{M_t}{\tilde M_T} c_\tau \; \E_tF_\tau.
\end{equation}
Apart from this, the strategy works similarly: the individual contributions are invested in the risky asset until the invested amount exceeds the wealth target for that contribution, in which case they are transferred to the matching portfolio until retirement. Thus, the control $\alpha_{t,\tau}$ for investing contribution $c_{t,\tau}$ is given by
\begin{numcases}{\alpha_{t,\tau}=\label{eqn:individual_control_alpha_t}}
0 &  if for any $\tau'=\tau\ldots t$ we have $W_{\tau',t} \geq \tilde W_{\tau',\tau},$\\
1 & otherwise.
\end{numcases}
At the aggregated level, the control $\alpha_t$ is now given by
\begin{equation}
\alpha_t=\frac{1}{W_t}\sum\limits_{\tau=0}^t W_{t,\tau}\alpha_{t,\tau}.
\end{equation}

Conceptually, the difference between the cumulative target strategy and the individual target strategy is what triggers the risk reduction. Contrary to the individual target strategy, in the cumulative target strategy new investments have to make up for insufficient past returns before a transfer to the matching portfolio can take place. On the other hand, in the cumulative target strategy good past returns may cause new contributions to be transferred immediately to the matching portfolio. With the individual target strategy, each contribution has to generate sufficient return on investment before such a transfer takes place.

\subsection{Combination strategy}\label{combination_strategy}
Both the cumulative and the individual target strategy either reduce risk by switching completely to the matching portfolio or don't reduce risk at all. Instead of completely switching or not switching at all, the combination strategy, which is the third strategy considered, combines the individual target strategy with dynamic programming to dynamically steer the wealth $W_{t,\tau}$ resulting from the contribution $c_\tau$ above its wealth target $\tilde W_{t,\tau}$. For this, we define the following wealth to target ratio,
\begin{equation}\label{eqn:state_variable}
Z_{t,\tau} := \frac{W_{t,\tau}}{\tilde W_{t,\tau}},
\end{equation}
and solve
\begin{equation}\label{eqn:dynamic_programming_problem_combination_strategy}
V(z,t,\tau) = \sup_{\A_{t,\tau}} \EX{\check{U}(Z_{T,\tau}) | Z_{t,\tau} = z},
\end{equation}
where $\check{U}$ is a utility function, $V(z,t,\tau)$ is the value function in the dynamic programming problem and the control $\A_{t,\tau}$ consists of the future investment decisions:
\begin{equation}
\A_{t,\tau} = \{\alpha_{t,\tau}, \ldots,\alpha_{T,\tau}\}.
\end{equation}
Using the dynamic programming principle, it follows that the optimal control, $\A_{t,\tau}^*$, satisfies
\begin{equation}
\A_{t,\tau}^* = \{\alpha_{t,\tau}^*, \A_{t+1,\tau}^*\},
\end{equation}
which allows us to solve for the optimal control problem for $\A_{t,\tau}^*$, backwards in time.

In this context, we choose a utility function that steers the ratio $Z_{T,\tau}$ in between the bounds $z^*_{\mathrm{min}}$ and $z^*_{\mathrm{max}}$. This is in line with the investor's goal of minimizing downside risk, and with our assumption that upward potential comes with downside risk. The utility function should be positive concave and takes here the following functional form:
\begin{align}\label{eq:utility}
\check{U}(z) &= \frac{-\lrp{z-\beta}^2 - \lrp{z -  z^*_{\mathrm{min}}}^2}{z},
\end{align}
where
\begin{align*}
\beta &= \sqrt{2\lrp{z^*_{\mathrm{max}}}^2 - \lrp{z^*_{\mathrm{min}}}^2},
\end{align*}
see Figure \ref{utility_function} (note that this is a different utility function than $U(\cdot)$ from (\ref{eqn:utility_shortfall}). Utility function $\check{U}(\cdot)$ is clearly concave and continuous on the domain $\R_{>0}$. We set $z^*_{\mathrm{min}}=1$ and $z^*_{\mathrm{max}}=3$, as this choice fits well with the investor's replacement ratio target and, as we will show in Section \ref{sec:acarulebased}, is sufficient to demonstrate the strategy's added value.

\begin{figure}
	\centering
	\begin{tikzpicture}
  \begin{axis}[
    domain=0.1:6,
    samples=301,
    smooth,
    no markers,
    xlabel={State},
    ylabel={Utility},
    cycle list name = MyLineList,
    legend cell align={left},
    legend entries={Utility function, $z^*_{\mathrm{min}}$, $z^*_{\mathrm{max}}$},
    tick align=outside,
    legend pos = south east,
    xtick = {0,1,2,3,4,5,6},
    tick pos=left,
    width=\linewidth,
    height = 0.3\textheight,
    xmin = 0, xmax = 6,
    ymin = -25, ymax = 0,
    ]
    \addlegendimage{no markers, OFblue}
    \addlegendimage{dotted, OForange}
    \addlegendimage{dotted, red}
    \addplot {(-(x-sqrt(2*3^2-1^2))^2 - (x-1)^2)/x};
    \addplot [semithick, OForange, dotted, forget plot]
    table [row sep=\\]{%
    	1	-1000 \\
    	1	1000 \\
    };
	\addplot [semithick, red, dotted, forget plot]
	table [row sep=\\]{%
		3	-1000 \\
		3	1000 \\
	};
	  \end{axis}
\end{tikzpicture}
	\caption{Plot of \eqref{eq:utility} with  $z^*_{\mathrm{min}}=1$ and $z^*_{\mathrm{max}}=3$.}
	\label{utility_function}
\end{figure}

Now, we will show that the ratio $Z_{t,\tau}$, between the current wealth $W_{t,\tau}$ and its target $\tilde W_{t,\tau}$, evolves in time by making returns on investment in the nominator and updating the inflation expectation in the denominator.
Since this time evolution is independent of $\tau$, we can show that the optimal control $\alpha_{t,\tau}^*$ is independent of $\tau$, i.e., once the optimal control is found, it can be applied to all contributions.
\begin{lemma}\label{lemma:indepence_tau}
	The optimal control $\alpha_{t,\tau}^*$ of dynamic programming problem \eqref{eqn:dynamic_programming_problem_combination_strategy} is independent of the contribution $c_\tau$ and the time $\tau$ at which the contribution is made.
\end{lemma}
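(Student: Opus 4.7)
The plan is to show that the one-step dynamics of $Z_{t,\tau}$ depend only on the control and market variables, and not on $\tau$ or $c_\tau$, and then conclude by backward induction on the Bellman equation.

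First I would write the evolution of the numerator and denominator of $Z_{t,\tau}$ separately. The wealth from a single contribution evolves as $W_{t+1,\tau} = W_{t,\tau}\bigl[\alpha_{t,\tau}(1+x_{t+1}) + (1-\alpha_{t,\tau})(1+m_{t+1})\bigr]$, which is linear in $W_{t,\tau}$ and therefore only multiplicative in $Z_{t,\tau}$. For the target, I would use \eqref{eqn:individual_wealth_target} and split
\begin{equation*}
F_\tau \;=\; \prod_{\tau'=\tau+1}^{t}(1+r+\pi_{\tau'}) \,\cdot\, \prod_{\tau'=t+1}^{T}(1+r+\pi_{\tau'}),
\end{equation*}
where the first product is $\F_t$-measurable. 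Taking $\E_t$ and $\E_{t+1}$ of $F_\tau$ and dividing, the $\tau$-dependent prefix cancels, leaving
\begin{equation*}
\frac{\E_{t+1}F_\tau}{\E_tF_\tau} \;=\; (1+r+\pi_{t+1})\cdot\frac{\E_{t+1}\!\prod_{\tau'=t+2}^{T}(1+r+\pi_{\tau'})}{\E_t\!\prod_{\tau'=t+1}^{T}(1+r+\pi_{\tau'})},
\end{equation*}
which is manifestly independent of $\tau$. Combined with the obvious $\tau$-independence of $M_{t+1}/M_t$, the denominator $\tilde W_{t+1,\tau}/\tilde W_{t,\tau}$ is independent of $\tau$. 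The contribution $c_\tau$ also cancels because it enters $\tilde W_{t,\tau}$ and $\tilde W_{t+1,\tau}$ identically. Therefore the transition law of $Z_{t,\tau}$ given $Z_{t,\tau}=z$ and $\alpha_{t,\tau}=\alpha$ depends only on $(z,\alpha,t)$ and on the market information, not on $(\tau,c_\tau)$.

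Having established this, I would apply backward induction on the Bellman equation derived from the dynamic programming principle. At the terminal date, $V(z,T,\tau)=\check U(z)$, which is trivially independent of $\tau$. For the inductive step,
\begin{equation*}
V(z,t,\tau) \;=\; \sup_{\alpha\in[0,1]}\EX{V(Z_{t+1,\tau},t+1,\tau)\,\bigl|\,Z_{t,\tau}=z,\alpha_{t,\tau}=\alpha},
\end{equation*}
and by the induction hypothesis $V(\cdot,t+1,\tau)$ is $\tau$-free, while by the previous paragraph the conditional distribution of $Z_{t+1,\tau}$ is also $\tau$-free. Hence the objective of the supremum is $\tau$-free, so both the value $V(z,t,\tau)$ and the maximizer $\alpha_{t,\tau}^*(z)$ are $\tau$-free, completing the induction.

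The main technical point is really the cancellation argument for $\E_t F_\tau$, because this is where $\tau$ enters $\tilde W_{t,\tau}$; once that is in place, the rest is a standard Markovian dynamic programming observation. I expect no substantive obstacle beyond being careful that the argument applies uniformly for all $\tau\le t$ (the case $t=\tau$, where the product $\prod_{\tau'=\tau+1}^{t}$ is empty, works as well).
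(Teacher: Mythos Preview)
Your proposal is correct and follows essentially the same route as the paper: both establish that the one-step evolution of $Z_{t,\tau}$ is $\tau$-free by writing the wealth recursion and the target recursion separately and observing that $c_\tau$ and the $\tau$-dependent prefix of $F_\tau$ cancel in the ratio. The only cosmetic difference is that the paper unrolls the recursion into a single product \eqref{eqn:optimal_control} and inspects it directly, whereas you run the equivalent backward induction on the Bellman equation; you are also more explicit about why $\E_tF_\tau/\E_{t-1}F_\tau$ is $\tau$-independent, a step the paper compresses by simply replacing $F_\tau$ with $F_{t-1}$ in \eqref{eqn:recursing_wealth_target}.
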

\begin{proof}
	The portfolio wealth $W_{t,\tau}$, accumulated by investing contribution $c_\tau$, increases with the return on investment and, therefore, satisfies
	\begin{equation}\label{eqn:recursing_wealth}
		W_{t,\tau}=\left[(1+x_t)\alpha_{t-1,\tau}+(1+m_t)(1-\alpha_{t-1,\tau})\right]W_{t-1,\tau}.
	\end{equation}
	From \eqref{eqn:individual_wealth_target}, \eqref{eqn:target_wealth_factor} and \eqref{eqn:returns matching portfolio}, it follows that the wealth target, $\tilde W_{t,\tau}$, satisfies
	\begin{equation}\label{eqn:recursing_wealth_target}
		\tilde W_{t,\tau}=\frac{\E_t F_{t-1}}{\E_{t-1}F_{t-1}} (1+m_t) \tilde W_{t-1,\tau}.
	\end{equation}
	Substitution of \eqref{eqn:recursing_wealth} and \eqref{eqn:recursing_wealth_target} in \eqref{eqn:dynamic_programming_problem_combination_strategy} yields that the optimal controls $\alpha_{t,\tau}^*$ solve
	\begin{equation}\label{eqn:optimal_control}
	\sup_{\A_{t,\tau}} \EX{\left.
		\check{U}\left(
		z \prod\limits_{\tau'=t+1}^T
		\frac{(1+x_{\tau'+1})\alpha_{\tau',\tau}+(1+m_{\tau'+1})(1-\alpha_{\tau',\tau})}{\frac{E_{\tau'+1} F_{\tau'}}{E_{\tau'} F_{\tau'}} (1+m_{\tau'+1}) }
		\right)
		\right| Z_{t,\tau} = z}.
	\end{equation}
	This shows that both the value function $V(z,t,\tau)$ and the optimal control $\alpha_{t,\tau}^*$ are independent of $\tau$.
\end{proof}
Lemma \ref{lemma:indepence_tau} implies that, theoretically, the dynamic programming problem has to be solved only once, i.e., the investment decisions for the first contribution $c_0$ can be used for all other contributions.

For the practical implementation for the dynamic programming algorithm, readers may refer to Appendix \ref{sec:algorithm}.

\subsection{Target replacement ratio} \label{sec:target-replacement-ratio}

The variable $r$, used in the construction of the wealth target, can be interpreted in multiple ways. First of all, it serves as a discount rate, which is used to compute the present value of contributions that are made in the future. It can also be viewed as an annual return requirement: each contribution is required to have an average annual return of $r$. A third interpretation of $r$ is that of a future expected annual return. The computation of the expected replacement ratio requires a future annual return assumption.

Let $t\in\T$ and let $\F_t$ be the corresponding filtration. The expected replacement ratio $R_t$ is defined as

\begin{align*}
R_t &:= \frac{\EX{P | \F_t}}{\EX{\sum\limits_{t = 0}^{T} s_t\prod\limits_{\tau=t+1}^{T} (1+\pi_\tau)|\F_t}},
\end{align*}	
where
\begin{align*}
\EX{P|\F_t} &= \frac{\EX{W_T|\F_t}}{\EX{M_T|\F_t}},
\end{align*}	
with
\begin{align*}
\EX{W_T|\F_t}&= \lrb{1+r+ I(T;t)}^{T-t}W_t + \sum_{k=t+1}^{T-1} \lrb{1+r+I(T; k})^{T-k}\EX{c_k|\F_t}\:.
\end{align*}

Computation of the expected replacement ratio requires four different estimators. The discount rate $r$ is used as an estimator for the future expected annual return. The estimator for the future inflation,  $I(T;t)$, has to be estimated through regression between the future and the past cumulative inflation, as shown in Equation (\ref{eqn:inflation}).
Future salaries are based on the information from Table \ref{table_salary}.
Lastly, the estimator for the market value factor at the end of the investment horizon, $\EX{M_T|\F_t}$, is based on regression between $M_t$ and $M_T$, with $\Phi = \{1,x\}$. 
See Appendix \ref{sec:least-squares-monte-carlo-method} for details of the regression method used.

The market value factor is considered to be independent of the discount rate $r$, inflation and wage inflation (the division operator can therefore be taken out of the expected value operator).

The computation of the target replacement ratio at time $t$ is similar to the computation of the expected replacement ratio. The only difference is that the portfolio wealth, $W_t$, is replaced by the target terminal wealth, $W^*(t)$. The target wealth definition causes the target replacement ratio, $R^*(t)$, to be independent of the market value factor:
\begin{align*}
R^*(t) &= \frac{\EX{P^*|\F_t}}{\EX{\sum\limits_{t = 0}^{T} s_t\prod\limits_{\tau=t+1}^{T} (1+\pi_\tau)|\F_t}},
\end{align*}
by using  the independence of the market factor $M_T$ and the wealth process and the definition of the current target wealth $\tilde{W}_t$,
\begin{align*}
\EX{P^*|\F_t} &= \frac{\EX{W^*(T)|\F_t}}{\EX{M_T|\F_t}} =\tilde{W}_t.
\end{align*}

As mentioned, the investor's target is to reach a replacement ratio of $70\%$. To translate this target into the wealth target in terms of portfolio wealth, we have:
\begin{align*}
\tilde{W}_t &= R^*(t)\EX{\sum\limits_{t = 0}^{T} s_t\prod\limits_{\tau=t+1}^{T} (1+\pi_\tau)|\F_t}\:.
\end{align*}

To steer towards a fixed replacement ratio target, $\tilde{W}(t)$ would have to be altered for each scenario. It is, however, easier to differ the quantity $R^*(t)$ slightly between scenarios, from a computational point of view. Instead, $\tilde{W}_t$ is defined as in Equation (\ref{eqn:cumulative_target_wealth}) and $r$ is set to the required annual return.

Numerically, we find that the target replacement ratio within a scenario is almost constant throughout time, as can be seen in the bottom-left plot of Figure \ref{single_scenario_cum}. Small alterations are caused by the estimators for the inflation and the wage inflation. Alterations of up to $0.01$ within a scenario are observed for a discount rate of $2.5\%$. Target replacement ratios are between $0.6847$ and $0.7033$ for a discount rate of $2.5\%$. 

\section{Numerical evaluation}
In this section, we apply the rule-based strategies described in Section \ref{rule_based} to the pension investor introduced in Section \ref{theoretical_background}, using the governing stochastic model described in Section \ref{sec:governing-stochastic-model}.


\subsection{Rule-based strategies}\label{sec:acarulebased}
To illustrate the dynamics of the rule-based strategies, Figure \ref{single_scenario_cum} shows one of the 2000 sample paths for the investor's portfolio dynamics. In particular, the top left figure shows the investor's wealth $W_t$ when following the cumulative target strategy (orange) and when the investor's wealth exceeds the target $\tilde W_t$ (yellow). Note that when this occurs, the investments are transferred to the matching portfolio (orange line, bottom right figure). The individual target strategy (in green) works similarly, but, as discussed, uses a target per contribution, so that, typically, only part of the wealth is transferred to the matching portfolio (green line, bottom right figure). The bottom left figure illustrates that, in this sample path, the rule-based strategies outperform the optimal static strategy in terms of expected replacement ratio - although only in the first 10 years of the investment the rule-based strategies take substantially more risk, i.e., have a substantially higher allocation to the return portfolio. Therefore, in this particular sample path, one could argue that the better performance comes from the rule-based strategies and not from increased exposure to risk.

\usepgfplotslibrary{groupplots}
\begin{figure}[tp!]
	\centering	
	\input{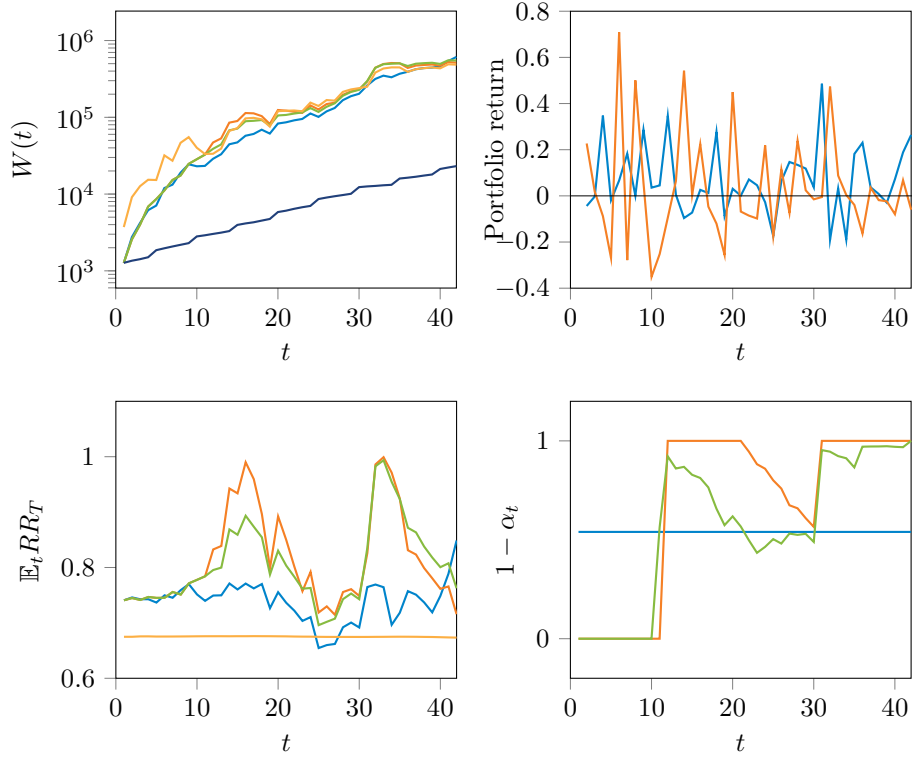}	
	\caption{Sample paths of wealth, return of the matching and return portfolio, expected replacement ratio, and allocation to the matching portfolio applied to the pension investor introduced in Section \ref{theoretical_background} following rule-based strategies with the discount rates $r=2\%$ and $\pi=2.5\%$, and using the governing stochastic model described in Section \ref{sec:governing-stochastic-model}. Top left: wealth $W_t$ for respectively the cumulative target strategy (orange), its wealth target $\tilde W_t$ (yellow), individual target strategy (green), the optimal static strategy (blue) and cumulative contribution (dark blue). Top right: return of the matching portfolio (blue) and return portfolio (orange). Bottom left: $70\%$ replacement ratio target (yellow) together with the expected replacement ratio of the cumulative target strategy (orange), individual target strategy (green) and optimal static strategy (blue). Bottom right: allocation $1-\alpha_t$ to the matching portfolio for the cumulative target strategy (orange), individual target strategy (green) and optimal static strategy (blue).}
	\label{single_scenario_cum}
\end{figure}
 
The combination strategy is best illustrated by means of the resulting investment decisions, i.e., the optimal control $\alpha_{t,\tau}$ as defined by equation \eqref{eqn:optimal_control}. Figure \ref{fig:decision_combination_strategy} illustrates the optimal allocation to the matching portfolio, $1-\alpha_{t,0}$, for the first contribution $c_0$ as a function of time $t$ and the wealth to target ratio, as defined by \eqref{eqn:state_variable}. In this example, allocations are restricted to multiples of $20\%$. Note that, contrary to the rule-based strategies, in the combination strategy risk can be increased and investments can be transferred from the matching to the return portfolio. All together, this makes the combination strategy more refined than the rule-based strategies, which follow a ``risk on'' or ``risk off'' approach, in terms of their allocation.

\begin{figure}[t!]
	\input{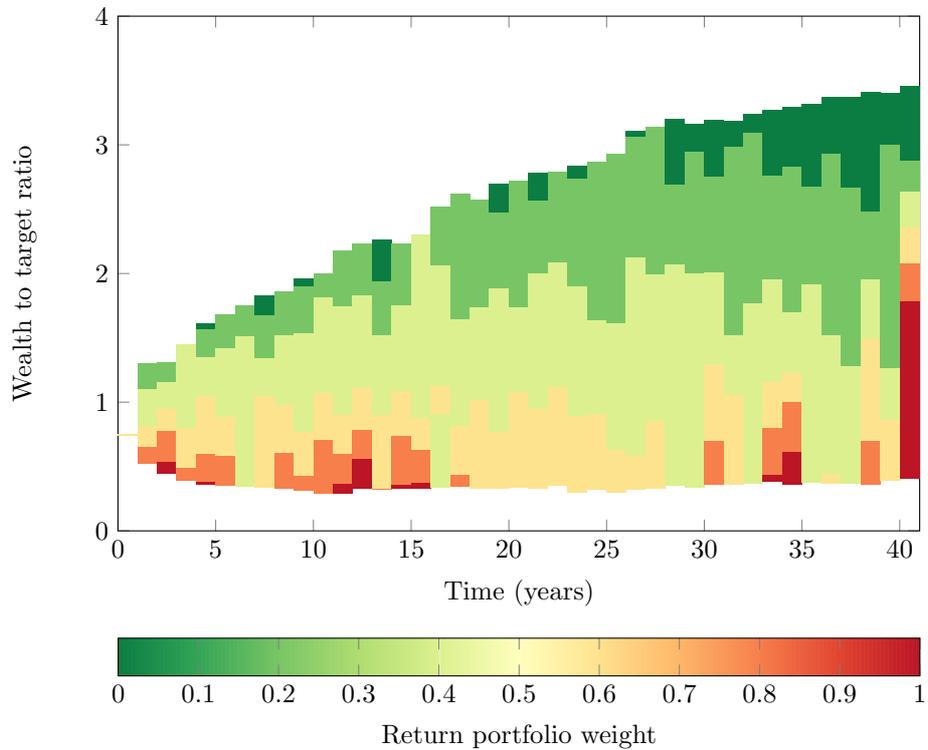}
	\caption{Optimal allocation $1-\alpha_{t,0}$ to the matching portfolio, as a function of wealth to target ratio $\nicefrac{W_{t,0}}{\tilde W_{t,0}}$ (y-axis), as defined by equation \eqref{eqn:state_variable}, and time $t$ (x-axis) for the first contribution of an investor following the combination strategy, discussed in Section \ref{combination_strategy}, and using the  stochastic model of section \ref{sec:governing-stochastic-model}. Allocations are restricted to multiples of $20\%$.}
	\label{fig:decision_combination_strategy}	
\end{figure}

Figure \ref{fig:replacement_ratio} compares the distribution of the terminal replacement ratio for the following best performing strategies in terms of the expected shortfall below the investor's $70\%$ replacement ratio target: two rule-based strategies, a combination strategy and a static strategy. The figure illustrates that, as intended, the dynamic strategies reduce downward risk at the expense of upward potential, i.e., the dynamic strategies are centered more around the target replacement ratio of $70\%$.

\begin{figure}[pt!]
	\input{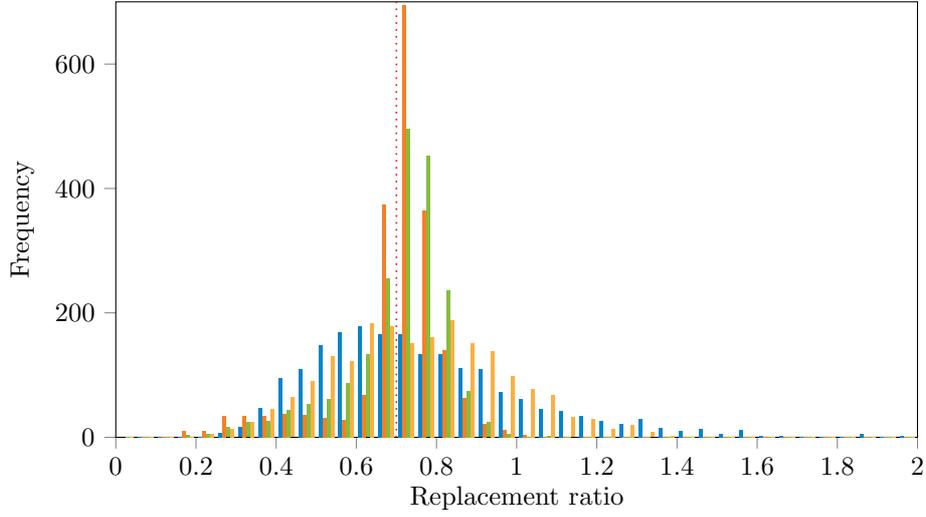}
	\caption{Distribution of the replacement ratio for, respectively, the cumulative target strategy with $r=3.06\%$ (orange), individual target strategy $r=2.99\%$ (green), combination strategy with $r=1\%$ (yellow) and a static strategy with $46.02\%$ constant allocation to the return portfolio.}
	\label{fig:replacement_ratio}	
\end{figure}

A comparison of all strategies is best made by comparing the strategy successes, i.e., whether a strategy achieves the intended $70\%$ replacement ratio target, versus its downside risk, and parametrize the strategies by the parameters that control the strategy's risk appetite, see Figure \ref{fig:10CVaR}. From this figure, we conclude that all the dynamic strategies clearly outperform the traditional static strategies. Together with the intuitive rationale to reduce risk after several good years, we believe this sufficiently demonstrates the added value of these dynamic strategies. We do, however, find these simulations insufficient to rank the dynamic strategies based on their effectiveness. It is well-known that the relative performance of dynamic strategies can be sensitive to the characteristics of the underlying stochastic model. As such, the characteristics are not completely objective, and we believe that use of the strategies in practice is an appropriate way to test the strategies further (which lies beyond the scope of this research).


\begin{figure}
	\input{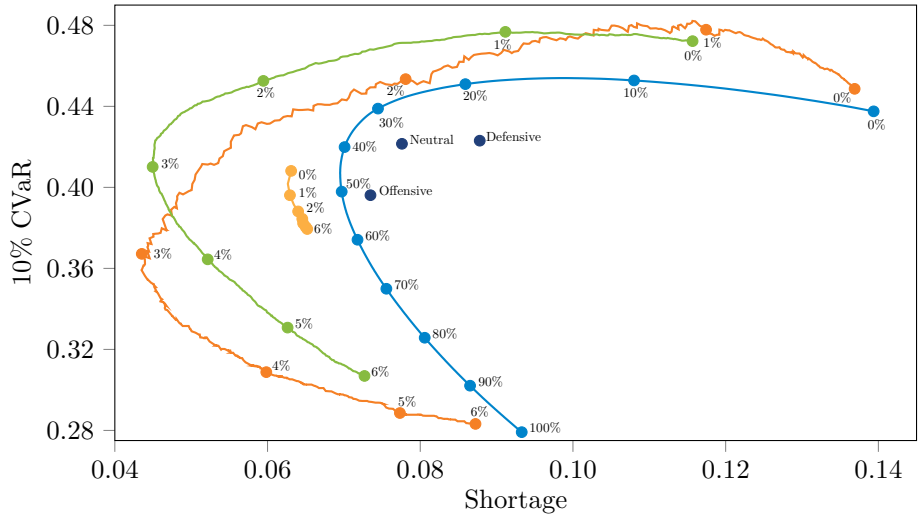}
	\caption{Expected shortfall below the investor's $70\%$ replacement ratio target, see equation \eqref{eqn:utility_shortfall}, versus $10\%$ CVaR of the terminal replacement ratio, as defined by (\ref{kkk}), for: the cumulative target strategy (orange), the individual target strategy (green), the combination strategy (yellow), all parametrized by the real rate of return $r$ (yellow), and, also, several annually rebalanced allocations (blue), and several default life cycles reducing risk with the investor's age.}
	\label{fig:10CVaR}	
\end{figure}


\subsection{Discussion}
One of the intended advantages of a static life cycle strategy is the reduced risk close to the retirement, meaning that one can provide the investor with an accurate estimate of her retirement income in the years before retirement. Table \ref{table:overview_strategies} provides a comparison of the dynamic strategies and traditional life cycle strategies. In particular, the table lists the standard deviations of the difference between the expected replacement ratio 5 years before retirement and the replacement ratio at retirement. We conclude that when following the rule-based strategies the investor can be provided with a similarly accurate estimate of the replacement ratio before retirement.

\enlargethispage{2.5cm}  
\begin{table}[H]
\hspace{-1.7cm}
\begin{tabular}{l|lll|lll|lll}
&\multicolumn{3}{c|}{\bfseries\sffamily Static mix} & \multicolumn{3}{c|}{\bfseries\sffamily Static life cycle} & \multicolumn{3}{c}{\bfseries\sffamily Dynamic strategies} \\ 
                    & $0\%$  &$100\%$&$46.02\%$& Def.   & Neut.  & Off.   & Cum.   & Indiv. & Comb.  \\ \hline
Averages ($R$)&&&&&&&&& \\	
\qquad Mean         & $0.56$ & $1.13$ & $0.77$ & $0.66$ & $0.71$ & $0.79$ & $0.70$ & $0.70$ & $0.75$ \\
\qquad $10\%$ CVaR  & $0.44$ & $0.28$ & $0.41$ & $0.42$ & $0.42$ & $0.40$ & $0.36$ & $0.41$ & $0.40$  \\
\qquad $5\%$ CVaR   & $0.42$ & $0.24$ & $0.37$ & $0.39$ & $0.39$ & $0.36$ & $0.28$ & $0.34$ & $0.35$  \\		
Percentiles ($R$) &&&&&&&&&\\		
\qquad Median      & $0.56$ & $0.83$ & $0.72$ & $0.64$ & $0.67$ & $0.72$ & $0.72$ & $0.73$ & $0.75$  \\
\qquad $10\%$ VaR  & $0.46$ & $0.36$ & $0.47$ & $0.47$ & $0.47$ & $0.46$ & $0.51$ & $0.52$ & $0.47$  \\
\qquad $5\%$ VaR   & $0.44$ & $0.29$ & $0.41$ & $0.43$ & $0.44$ & $0.41$ & $0.37$ & $0.44$ & $0.41$  \\		
Goal ($70\%~R$) &&&&&&&&&\\
\qquad Shortage    & $0.139$ & $0.093$ & $0.070$ & $0.088$ & $0.078$ & $0.073$ & $0.044$ & $0.045$ & $0.063$ \\
\qquad Goal reached& $6\%$   & $60\%$  & $53\%$  & $35\%$  & $44\%$  & $53\%$  & $65\%$ & $65\%$ & $57\%$  \\
Estim. error ($R$) &&&&&&&&& \\
\qquad Mean		   & $0.088$ & $0.398$ & $0.139$ & $0.095$ & $0.092$ & $0.115$ & $0.113$ & $0.091$ & $0.144$  \\ 
\qquad Std dev.    & $0.06$  & $0.57$  & $0.12$  & $0.06$  & $0.06$  & $0.10$  & $0.09$  & $0.07$  & $0.12$ \\
\end{tabular}
\caption{Statistics for different investment strategies. Values in this table were calculated using $r = 3.06\%$ for the combination strategy, $r = 2.99\%$ for the individual strategy, and $r = 1\%$ for the combination strategy.}
\label{table:overview_strategies}
\end{table}

Although the rule-based strategies outperform other strategies in our examples, we wish to point that there are also disadvantages in the all-or-nothing approach, e.g, the portfolio remains $100\%$ invested in the more risky return portfolio when targets are not reached. Such truly worst case scenarios appear to have a minor influence, but are, e.g., illustrated in the far left lower tail in Figure \ref{fig:10CVaR}. The individual target strategy presented in Section \ref {sec:indv_target} suffers less from the all-or-nothing disadvantages, as it defines a target per contribution. As a result, inferior past performance does not influence the required performance of current and future contributions.

Compared to the rule-based strategies, the combination strategy does not exploit the fact that the matching portfolio can grow an investment securely to its intended target (indexed by expected inflation) until retirement. As the rule-based strategies explicitly made use of this, the combination strategy could be further improved.

One advantage of the combination strategy in practical use is that the corresponding asset allocation is much more smooth than for the rule-based strategies. The necessity of large turnovers is difficult to explain and investors might be uncomfortable to follow such a drastic strategy to the end.

\section{Conclusion}\label{conclusion}

In this paper, we discussed several dynamic strategies, suitable for pension investors that aim to replace a proportion of their salary with a retirement income. The strategies reduce risk after several good years and steer the investor to her target. By having the allocation depend on return on investment, the approaches exploit a freedom which is typically not used by traditional static approaches. We have shown that the dynamic approaches may outperform some traditional static approaches and prevent unnecessary risk taking.

Two simple and intuitive rule-based strategies were introduced that secure investments in a cash flow matching portfolio once they yielded sufficient return. Although both rule-based strategies can straightforwardly be implemented in practice, we recommend to also investigate alternatives where the investor, e.g., switches between an aggressive traditional life cycle and a matching portfolio to rule out very aggressive portfolios close to retirement.

The rule-based strategies were further refined into a combination strategy based on dynamic programming. In the current setup, the combination strategy may not be superior and we even found that the rule-based strategies outperform the combination strategy in a numerical example. We certainly believe that dynamic strategies based on dynamic programming can be further improved, as also this research clearly demonstrates their added value to pension investors.

A most suitable dynamic strategy is hard to determine objectively as its performance depends on the governing stochastic model. Also, such a dynamic strategy should fit well with practical requirements, such as whether an investor will follow through on the strategy or will feel the need to combine such a strategy with her own judgement, and whether such strategies comply with regulations. This research, however, demonstrates the added value of dynamic strategies to pension investors. In summary, such strategies exploit freedom that is not used by traditional approaches, can steer a pension investor to her target and prevent unnecessary risk taking.

\printbibliography

\appendix

\section{Career path and contribution rate}\label{sec:career-path-and-contribution-rate}

\begin{table}[H]
	\centering
	\begin{tabular}{l|l|l|l|l}
		{\footnotesize Age }  &
		\multicolumn{2}{c|}{{\footnotesize Salary}} & 
		\multicolumn{2}{c}{{\footnotesize Contribution}}\\
		& {\footnotesize Rate }& {\footnotesize Euro} & {\footnotesize Rate }& {\footnotesize Euro} \\
		\hline
		25    & 3\%   & 29403 & 7.8\% & 1270 \\
		26    & 3\%   & 30285 & 7.8\% & 1339 \\
		27    & 3\%   & 31193 & 7.8\% & 1409 \\
		28    & 3\%   & 32129 & 7.8\% & 1482 \\
		29    & 3\%   & 33093 & 7.8\% & 1558 \\
		30    & 3\%   & 34086 & 9.0\% & 1887 \\
		31    & 3\%   & 35108 & 9.0\% & 1979 \\
		32    & 3\%   & 36162 & 9.0\% & 2073 \\
		33    & 3\%   & 37247 & 9.0\% & 2171 \\
		34    & 3\%   & 38364 & 9.0\% & 2272 \\
		35    & 2\%   & 39131 & 10.5\% & 2731 \\
		36    & 2\%   & 39914 & 10.5\% & 2813 \\
		37    & 2\%   & 40712 & 10.5\% & 2897 \\
		38    & 2\%   & 41526 & 10.5\% & 2982 \\
		39    & 2\%   & 42357 & 10.5\% & 3070 \\
		40    & 2\%   & 43204 & 12.2\% & 3670 \\
		41    & 2\%   & 44068 & 12.2\% & 3775 \\
		42    & 2\%   & 44950 & 12.2\% & 3883 \\
		43    & 2\%   & 45849 & 12.2\% & 3993 \\
		44    & 2\%   & 46765 & 12.2\% & 4104 \\
		45    & 1\%   & 47233 & 14.2\% & 4844 \\
	\end{tabular}\hfill
	\begin{tabular}{l|l|l|l|l}
		{\footnotesize Age }  &
		\multicolumn{2}{c|}{{\footnotesize Salary}} & 
		\multicolumn{2}{c}{{\footnotesize Contribution}}\\
		& {\footnotesize Rate }& {\footnotesize Euro} & {\footnotesize Rate }& {\footnotesize Euro} \\
		\hline
		46    & 1\%   & 47705 & 14.2\% & 4911 \\
		47    & 1\%   & 48183 & 14.2\% & 4978 \\
		48    & 1\%   & 48664 & 14.2\% & 5047 \\
		49    & 1\%   & 49151 & 14.2\% & 5116 \\
		50    & 1\%   & 49642 & 16.5\% & 6026 \\
		51    & 1\%   & 50139 & 16.5\% & 6108 \\
		52    & 1\%   & 50640 & 16.5\% & 6190 \\
		53    & 1\%   & 51147 & 16.5\% & 6274 \\
		54    & 1\%   & 51658 & 16.5\% & 6358 \\
		55    & 0\%   & 51658 & 19.4\% & 7476 \\
		56    & 0\%   & 51658 & 19.4\% & 7476 \\
		57    & 0\%   & 51658 & 19.4\% & 7476 \\
		58    & 0\%   & 51658 & 19.4\% & 7476 \\
		59    & 0\%   & 51658 & 19.4\% & 7476 \\
		60    & 0\%   & 51658 & 23.0\% & 8863 \\
		61    & 0\%   & 51658 & 23.0\% & 8863 \\
		62    & 0\%   & 51658 & 23.0\% & 8863 \\
		63    & 0\%   & 51658 & 23.0\% & 8863 \\
		64    & 0\%   & 51658 & 23.0\% & 8863 \\
		65    & 0\%   & 51658 & 26.0\% & 10019 \\
		66    & 0\%   & 51658 & 26.0\% & 10019 \\
	\end{tabular}%
	\caption{Salary in nominal amount, annual salary increase as rate and contribution in both nominal amount and as percentage by age  at $t=0$, corresponding to ultimo 2016. All is based on statistics reported for the Netherlands in 2016. Salary for a 25 year old is reported by Eurostat as the average Dutch salary under 30 years old in 2014 and is indexed with 2 years of wage inflation as reported by the Dutch Central Bureau of Statistics. The salary increases correspond to a career path used in actuarial calculations of the Dutch government (available under document number \href{https://zoek.officielebekendmakingen.nl/blg-230018}{blg-230018}). Contribution percentages are as prescribed under \href{https://wetten.overheid.nl/jci1.3:c:BWBR0039139&z=2017-01-28&g=2017-01-28}{Dutch law} in 2016 and are applied to the difference between the salary and a franchise of 13123 Euro. For $t>0$, nominal amounts are indexed with wage inflation and rates are kept fixed.}
	\label{table_salary}
\end{table}%

\section{Dynamic Programming Implementation} \label{sec:algorithm}

In this section, we will give a brief description of various computational insights from our implementation. 
The algorithm for the combination strategy includes the dynamic programming design, the selection of the state variable and the use local regression.
The local regression technique in this section is also used by the target strategies.


\subsection{Dynamic Programming Algorithm}

Asset allocations for a dynamic programming strategy follow from Algorithm \ref{algorithm_1}. The algorithm runs backward in time, similar to the algorithms in \citet{Oos16} and \citet{Bin07}. 

\enlargethispage{\baselineskip}

The solution space is discretized such that a suitable algorithm can be used to solve the dynamic programming problem: it is assumed that
\begin{align}\label{eq:discretization}
\S: \T \times \R_{>0} \rightarrow \{a_1, a_2, \ldots, a_k\},
\end{align}
with the values of the control, $a_j \in [0,1]$ for $j \in\{1,\ldots, k\}$. The investor can choose between at most $k$ asset allocations at each time $t$.

Algorithm \ref{algorithm_1} solves the optimal control problem backward in time by calculating the expected utility, $\EX{\check{U}(Z_T)|\F_t}$, with the state variable $Z_t$ (see Section \ref{sec:state-variable}) by the local regression technique (see Section \ref{sec:least-squares-monte-carlo-method}).
The use of local regression is similar to the use of bundling in \citet{Oos16}: neighborhood points are used in the local regressions for each step of the algorithm.

Solving the sub-problem at time $t_i$ changes the future states $Z_{t_{i+1}}, \ldots, Z_{t_{n-1}}$. These states have previously been used in the local regressions to find the optimal solution for the sub-problems at times $t_{i+1}, \ldots, t_{n-1}$. Thus, the optimal solutions for sub-problems at $t_{i+1}, \ldots, t_{n-1}$ may be different after solving the sub-problem at time $t_i$. This is why Algorithm \ref{algorithm_1} follows a ``snake-like pattern'' through time: after the sub-problem at time $t_i$ is solved, future sub-problems are first updated in a forward fashion in time. Sub-problems are subsequently updated backwards in time at the next step until the sub-problem at time $t_{i-1}$ is solved for the first time.

Once all sub-problems have been solved, the solution can be further improved by repeating the procedure. Algorithm \ref{algorithm_1} restarts at the beginning of the snake-like pattern through time. Each iteration of Algorithm \ref{algorithm_1} follows the snake-like pattern from $T$ to $t_0$ once.
\begin{algorithm}
	\DontPrintSemicolon
	\KwData{Scenario Set, asset allocations $a_1, \ldots, a_k$}
	\KwResult{Optimal admissible asset control $\A^*$}
	Initialization: create initial solution by choosing $\alpha_i = a_1$ for all $t_i \in \T = \{t_0, \ldots, t_n = T\}$ for all scenarios.\;	
	\nl\For{$m = 1,\ldots,\mathrm{number~of~iterations}$}{
		\nl\For{$j = 1, \ldots, k$}{\nl Set $\alpha_{n-1} = a_j$ for all scenarios\; 
			\nl Determine the expected utility function $f_{j, n-1}$,\;
			\qquad\qquad $f_{j, n-1} (z) = \EX{\check{U}(Z_T) |Z_{n-1} = z, \alpha_{n-1} = a_j }$,\;
			by using local regression for $Z_{n-1}$ and $\check{U}(Z_T)$ (Least Squares Monte Carlo method)\;
		}
		\nl Determine optimal asset allocation decision at time $t_{n-1}$ for all scenarios:\;
		\qquad\qquad $m = \argmax_{j \in \{1,\ldots, m\}} f_{j, n-1}\lrp{Z_{n-1}}$\;
		\qquad\qquad $\alpha^*_{n-1} = a_m$\;	
		\nl\For{$t_i = t_{n-2}, t_{n-3}, \ldots, t_0$}{
			\nl Update optimal allocation decisions (backward update):\;
			\For{$\tau = n, \ldots, i+1$}{\nl Update expected utility function $f_{j,\tau}$, for $j = 1, \ldots, k$\;
				\nl Update optimal asset allocation decision $\alpha^*_\tau$, for all scenarios\;}	
			\nl Determine optimal asset allocation decision at time $t_i$ for the first time\;	
			\For{$j = 1, \ldots, k$}{\nl Set $\alpha_i = a_j$ for all scenarios\;
				\nl Determine the expected utility function $f_{j,i}$,\;
				\qquad\qquad $f_{j,i} (z) = \EX{\check{U}(Z_T) |Z_{i} = z, \alpha_i = a_j, \A_{i+1} = \A^*_{i+1}}$,\;
				by using local regression for $Z_{i}$ and $\check{U}(Z_T)$ (Least Squares Monte Carlo method)\;}
			\nl  Determine optimal asset allocation decision at time $t$, for all scenarios:\;
			\qquad\qquad $m = \argmax_{j \in \{1,\ldots, k\}} f_{j,i}\lrp{Z_i}$\;
			\qquad\qquad $\alpha^*_i = a_m$\;	
			\nl Update optimal allocation decisions (forward update):\;
			\For{$\tau = i+1, \ldots, n-1$}{\nl Update expected utility function $f_{j,\tau}$, for $j = 1, \ldots, k$\;
				\nl Update optimal asset allocation decision $\alpha^*_\tau$, for all scenarios\;}		
	}}
	\caption{Dynamic programming investment strategy}
	\label{algorithm_1}
\end{algorithm}

\subsection{State Variable}\label{sec:state-variable}
Individual wealth targets are not constant over the time horizon. They are defined using the expected inflation and the market value factor at time $t$ (see also Section \ref{sec:target-replacement-ratio}). The individual wealth targets are also not constant between the different scenarios, as they are dependent on the  contribution. The dynamic programming approach requires, however, a fixed wealth target in order to evaluate the expected utility. This issue is resolved by using the ratio between the portfolio wealth and the wealth target $Z_t$, defined in Equation \eqref{eqn:state_variable}, as the state variable of the dynamic programming algorithm.
The target state is now constant in time: $Z^*_t = 1$. 

An advantage of choosing this state variable is that, theoretically, the dynamic programming solution only has to be computed once. The investment decisions for the first contribution of the investor can be used for all other contributions. Dynamic programming results for the first contribution span the time horizon $[t_0, T]$. At each rebalancing time, the optimal investment decision, depending on state $Z_t$, has already been computed. Because a ratio is used, these investment decisions can also be used for the contributions in later years.

This is, however, not the case in practice. Algorithm \ref{algorithm_1} does not fully converge due to the limited sample size and due to sets of observations changing multiple times per iteration. Running Algorithm \ref{algorithm_1} separately for each contribution will  give better decision rules. Separate runs of the algorithm will provide a better approximation of the optimal solution on average. 


\subsection{Least squares Monte Carlo method}\label{sec:least-squares-monte-carlo-method}
At each optimization step in Algorithm \ref{algorithm_1}, a least squares Monte Carlo method is used to avoid nested simulations (and the, related, exploding computation times). The least squares Monte Carlo method was introduced by \citet{LonS01} as a simple method for pricing American options by simulation. The conditional expectation of the pay-off under the assumption of not exercising the option is estimated by using cross-sectional information already available in the simulation. Realized pay-offs from continuation (or, in the pension investment setting, from final utility $\check{U}(Z_T)$) are regressed on functions of the state variables. The fitted value provides an estimate of the conditional expectation.

The regression employed is based on a so-called {\em regress-now} strategy, and specifically on a local regression version. Regress-now estimates the expectation $\EX{Y_{t_{i+1}}|X_{t_i}}$, $X_{t_i} \in \F_{t_i}$ by using a set of basis functions, $\Phi$, with index set $\mathcal{J}$:
\begin{equation*}
Y_{t_{i+1}} \approx \sum_{j \in \mathcal{J}} c_j\varphi_j(X_{t_i}),
\end{equation*}
with $c_j$ coefficients found by using least squares regression and $\varphi_j \in \Phi$. Substitution gives
\begin{align*}
\EX{Y_{t_{i+1}}|X_{t_i}} &\approx \EX{\left.\sum_{j \in \mathcal{J}} c_j\varphi_j(X_{t_i})\right|X_{t_i}}\\
&= \sum_{j \in \mathcal{J}} c_j\varphi_J(X_{t_i})\:.
\end{align*}

Local regression, introduced by \citet{Cle79}, estimates a linear or quadratic polynomial fit at $x$ by using a weighted least squares regression. Weights for an observation $(x_i,y_i)$ are dependent on the distance between $x_i$ and $x$ \citep{Cle91}. The smoothness of the fit is dependent on the percentage of observations that are taken into account when evaluating at $x$. 

Let $n$ be the number of observations and let $0<d\leq1$ be a neighborhood parameter, i.e., the share of observations used for the weighted least squares regression at the evaluation point. Furthermore, let $k = d\cdot n$, rounded up to an integer value, $\Delta_i(x)$ be the Euclidean distance of $x$ to $x_i$, and $\Delta_{(i)}(x)$ be the values of these distances, ordered from smallest to largest.

The weight  $\zeta_t$ for an observation $(x_i, y_i)$ is then equal to
\begin{equation*}
 \zeta_i(x) = T\lrp{\frac{\Delta_i(x)}{\Delta_{(k)}(x)}},	
\end{equation*}	
with
\begin{align*}
T(u) &= \left\{ \begin{array}{@{}ll@{}}\lrp{1-u^3}^3, &  \mathrm{for~} 0\leq u < 1, \\ 0, & \mathrm{for~} u \geq 1, \end{array}\right.
\end{align*}
also known as the tri-cube weight function.

 Not only is the regress-now strategy used to approximate the expected utility, this strategy is also used to estimate the expected annual inflation $\mathbb{E}_t\pi_\tau$, with $\Phi = \{1,x\}$. The future cumulative inflation, $Y_{t_{i+1}}$, is regressed on the past cumulative inflation, $X_{t_i}$, to estimate the future annual inflation at time $t_i$:

\begin{align*}
x_j &= \prod_{k = t}^j (1 + \pi_k),\\
y_j &= \prod_{k = j+1}^T (1+ \pi_k),
\end{align*}
with $x_j \in X_{t_i}$ and $y_j \in Y_{t_{i+1}}$ for each $j$ in the scenario set.

The resulting regression function is of the form $f = \xi_1 x + \xi_2$. The expected effective annual inflation for the time period $k$ to $T$ is now equal to:
\begin{equation}
I(T;k) := \sqrt[T - (k+1)]{\xi_1 x_k + \xi_2} -1\:. \label{eqn:inflation}
\end{equation}

\end{document}